\documentclass{article}

\usepackage{microtype}
\usepackage[utf8]{inputenc}

\usepackage{amsmath,amsthm,amsfonts}
\usepackage{enumitem}
\usepackage[margin=1in]{geometry}
\usepackage{url}

\newcommand{\BO}[1]{{ O}\left(#1\right)}

\newcommand{\BTO}[1]{\tilde{ O}\left(#1\right)}

\newcommand{\BT}[1]{{\Theta}\left(#1\right)}

\newcommand{\h}{\mathcal{H}}

\DeclareMathOperator*{\argmin}{arg\,min}
\DeclareMathOperator*{\E}{\mathbf{E}}

\newtheorem{theorem}{Theorem}

\newtheorem{definition}[theorem]{Definition}

%for backwards compatibility with natbib (which doesn't play nicely with arxiv)
\newcommand{\citep}{\cite}

\newcommand{\OUT}{\textnormal{OUT}}

\newcommand{\seclabel}[1]   {\label{sec:#1}}

\title{Adaptive MapReduce Similarity Joins \\
{\large Extended abstract}
}

\date{}

\author{Samuel McCauley \footnote{ {BARC and IT U. Copenhagen, Denmark.}
\protect\url{samc@itu.dk}}
\and
Francesco Silvestri
\footnote{University of Padova, Italy.
\protect\url{silvestri@dei.unipd.it}}
}

\begin{document} 

\maketitle

\begin{abstract}

Similarity joins are a fundamental database operation.  Given data sets $S$ and $R$, the goal of a similarity join is to find all points $x \in S$ and $y \in R$ with distance at most $r$.  
%Similarity joins are common in many modern applications, including image analysis, machine learning, and data deduplication.
%In LSH, close points (at distance at most $r$) are much more likely to collide than far points.  Thus, a dataset can be preprocessed with LSH, allowing the final join algorithm to only compare colliding points. This has the potential for significant speedups.  
%However, known LSH-based methods suffer from one of two major downsides.  They either occur high overhead on easy datasets (to rule out worst-case behavior), or suffer from poor load balancing on multiple machines. 
%In addition they must be parameterized by a dataset-dependent $c$; choosing the wrong $c$ can be disastrous for running times.  
  Recent research has investigated how locality-sensitive hashing (LSH) can be used for similarity join, and in particular two recent lines of work have made exciting progress on LSH-based join performance.  Hu, Tao, and Yi (PODS 17) investigated joins in a massively parallel setting, showing strong results that adapt to the size of the output.  Meanwhile, Ahle, Aum\"{u}ller, and Pagh (SODA 17) showed a sequential algorithm that adapts to the structure of the data, matching classic bounds in the worst case but improving them significantly on more structured data.

    We show that this adaptive strategy can be adapted to the parallel setting, combining the advantages of these approaches.  In particular, we show that a simple modification to Hu et al.'s algorithm achieves bounds that depend on the density of points in the dataset as well as the total outsize of the output.  Our algorithm uses no extra parameters over other LSH approaches (in particular, its execution does not depend on the structure of the dataset), and is likely to be efficient in practice.  
\end{abstract}

\section{Introduction}
\seclabel{intro}
 
        Similarity search is a fundamental problem in computer science where we seek to find items that are similar to one another.
        In this paper, we focus on the problem of {similarity joins} for high dimensional data, which can be viewed as a large number of batched similarity searches.  In particular, given two sets $R$ and $S$, we wish to find all pairs $(x,y)$ (with $x\in R$ and $y\in S$) where $x$ and $y$ have similarity above some threshold $r$.  
        Our results are largely agnostic to the particular similarity function used; for example one can immediately apply our techniques to Hamming distance, $\ell_1$ or $\ell_2$ distances, cosine similarity, Jaccard or Braun-Blanquet similarity \cite{ChristianiPaSi17} or even more exotic measures like Frechet distance \cite{DriemelS17}.  
Similarity joins have wide-ranging applications, such as web deduplication \cite{Bayardo_WWW07}, document clustering \cite{Broder_NETWORK97}, and data cleaning \cite{Arasu_VLDB06}.

        Unfortunately, similarity joins are extremely computationally intensive.\footnote{In fact, %there are conditional lower bounds showing that 
        this large computation may be unavoidable for some metrics, see e.g. \cite{Williams18}.}  For this reason, when performing similarity joins on large datasets it is often useful to use massively parallel machines using frameworks like MapReduce and Spark.
        A recent work \cite{HuTaYi17} has proposed an output sensitive MapReduce algorithm that leverages Locality Sensitive Hashing (LSH).
       When executed on $p$ machines and on two relations containing $n$ tuples, their solution requires $\BO{1}$ rounds and load  (i.e., maximum number of messages received/sent by a processor) 
       \[
    \label{eq:hu}
    \BO{\sqrt{\frac{\OUT_r }{p}p^{\rho/(1+\rho)}} + \sqrt{\frac{\OUT_{cr}}{p}}+\frac{n}{p}p^{\rho/(1+\rho)}},
  \]
where $\OUT_r$ is the  number of pairs with distances smaller than or equal to $r$,  $\OUT_{cr}$ is the number of pairs with distances in the range $(r,cr]$, and $\rho\in [0,1]$ is a value characterizing the LSH  (see Section \ref{sec:lsh}).
        This bound highlights some limitations of the standard LSH approach: there is an $\OUT_{cr}$ term due to ``false positives'' of the LSH, and there is a multiplicative term $p^{\rho/(1+\rho)}$ in the $\OUT_r$ contribution due to near pairs being reported multiple times.
     
     In this paper we show how the load of the previous algorithm can be improved by exploiting the novel LSH approach presented in \citep{AhleAuPa17}, which leverages a multi-level LSH data structure for solving the range reporting problem.
  Specifically, we set a small term $\kappa$ based on $p$ and the LSH parameters. For each $1\leq i < \kappa$, we say that a point in $x\in R$ is \emph{$i$-dense} if its number of near points in $S$ is in the range $[n p_2^{i-1},n p_2^{i})$; 
a point is \emph{$\kappa$-dense} if its number of near points in $S$ is smaller than $n p_2^{\kappa}$ (similar definitions hold for points in $S$).  
We let $\OUT_r,i$ denote the number of near pairs containing at least one $i$-dense point.
In this paper, we describe an MPC algorithm requiring $\BO{1}$ rounds and load
\begin{equation*}
\BTO{\sqrt{\left(\sum_{i=0}^\kappa \frac{\OUT_{r,i}}{p p_1^i }\right)} + \sqrt{\frac{\OUT_{cr}}{p}} + \frac{n}{p}p^{\rho/(1+\rho)}}.
\end{equation*}

        In the most extreme cases, there are no dense clusters (i.e. all close pairs contain only $\kappa$-dense points), in which case we get the same performance as \citep{HuTaYi17}. 
        However, the data structure has any dense clusters we get improved bounds. For example, if there is a cluster of $\Omega(n)$ close points, the first term of Hu et al. is $O(n/p^{1/(2(1+\rho))})$ while for us it is $O(n/p^{1/2})$.  Our bounds give a smooth decrease in performance as the size of the cluster decreases.

\subsection{Related Work}
Exact similarity search has been widely studied in the literature; we refer to \cite{Augsten13} for a survey. 
Approximate algorithms for similarity join often rely on LSH, with the underlying idea to adapt the indexing approach in \cite{Gionis_VLDB99}. 
This approach has been adapted for use in the I/O model \cite{PaghPhSi15} and in the MPC model \cite{HuTaYi17}.

A novel sequential LSH approach was recently introduced~\citep{AhleAuPa17} which dynamically adapts to the difficulty of each query; our result implements this idea in a massively parallel setting.  In short, this approach uses a simple recursive stopping rule to adapt to the structure of the dataset; we adapt this rule to the similarity join setting. 
Recently, the paper \cite{ChristianiPaSi17} adapted this approach for similarity join under Braun-Blanquet similarity; our results share the same basic principles but apply to more general distance metrics.

%Our results use the \emph{parameter-free} approach of Ahle et al.~\cite{AhleAuPa17}.  In short, this approach tailors the hash to a dataset---gaining in performance on datasets with sufficient structure---using a simple recursive approach.  This paper applies their recursive approach to the setting of parallel joins.

\section{Preliminaries}
\seclabel{prelim}
In this section, we describe the adopted computational model and some relevant results on equi-joins, similarity search and LSH.
\subsection{Computational model}
In the literature, there are several computational models for massively parallel systems aiming  at describing MapReduce-like systems (e.g., \cite{PietracaprinaPuRi12,Beame13,Karloff10}). 
The majority of these models are very close to the bulk-synchronous parallel (BSP) model by L. Valiant \cite{Valiant90}, and in general they differ from the BSP on the cost functions (e.g., round number vs running time), parameters (e.g., local and global memory vs bandwidth and latency), and on some modeling aspects (e.g., a dynamic number of processing units vs a fixed number of processors in order to capture elastic settings in cloud).

In this paper,  in continuity with the previous work on similarity join in MapReduce \cite{HuTaYi17}, we use the \emph{Massively Parallel Computational} (MPC) model in \cite{Beame13}. 
It consists of $p$ processors $P_1, P_2,\ldots , P_p$ that are connected by a complete network.
In each round, each server does some local computation  and then sends messages to other servers, which will be received at the beginning of the subsequent round.
The complexity of the algorithm is the \emph{number of rounds} and the \emph{load} $L$, which is the maximum size of sent/received messages by each processor in each round.
For simplicity, we assume in the paper that $p<n^\epsilon$ for some constant $0<\epsilon<1$: this implies that sorting and prefix sum computations on input size $n$ can be performed in $\BO{1}$ rounds and load $\BO{n/p}$.
In general, the goal is to design MapReduce algorithms with a constant number of rounds. However, some works have shown that there are some inherent tradeoffs between round number and total communciation cost \cite{CeccarelloS15,afrati_et_al}.

\subsection{Equi-Join}
%We now introduce the equi-join problem and a MapReduce solution, which is used by our algorithm.
%The {equi-join} problem between two relations is the following. 
Let $R, S$ be two relations of total size $n$. The \emph{equi-join} of $R$ and $S$, denoted with $R \bowtie S$, is the set containing all pairs $(r,s)$ such that $r\in R$, $s \in S$ and $r=s$.
Hu et al. \cite{HuTaYi17} provided an optimal output sensitive and $\BO{1}$-round MPC algorithm for equi-join, with the following bounds:
\begin{theorem}[\cite{HuTaYi17}]\label{thm:equijoin}
There is an optimal deterministic algorithm that computes the equi-join between two relations of total size $n$ in $\BO{1}$ rounds and with load $\BT{\sqrt{O/p} + n/p}$, where $O=|R \bowtie S|$ denotes the equi-join size. It does not assume any prior statistical information about the data.
\end{theorem}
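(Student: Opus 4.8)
The plan is to prove a matching lower bound and then give a deterministic, statistics-free algorithm meeting it. For the lower bound, $L=\BOM{n/p}$ because some processor must receive at least a $1/p$ fraction of the $n$ input tuples, and $L=\BOM{\sqrt{O/p}}$ by the standard accounting that a processor reporting a pair $(r,s)\in R\bowtie S$ must at some point hold both $r$ and $s$ locally: over $\BO{1}$ rounds with load $L$ it sees $\BO{L}$ tuples and hence can be responsible for only $\BO{L^2}$ output pairs, so $p\cdot\BO{L^2}\ge O$. It remains to match $L=\sqrt{O/p}+n/p$.

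First I would have the algorithm compute the statistics it needs. Since $p<n^\epsilon$, sorting $R\cup S$ by join value followed by a prefix-sum pass runs in $\BO{1}$ rounds and load $\BO{n/p}$, and this yields for every distinct value $v$ the degrees $d_R(v)=|R_v|$ and $d_S(v)=|S_v|$, the local output size $O_v=d_R(v)\,d_S(v)$, the global $O=\sum_v O_v$, and hence the target $L$. Call $v$ \emph{heavy} if $\max\{d_R(v),d_S(v)\}>L$ and \emph{light} otherwise; since the degrees sum to $n$, there are at most $2n/L\le 2p$ heavy values.

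Light values are handled by balanced packing: scanning the light values in sorted order, greedily group maximal consecutive runs so that each group has total tuple count at most $2L$. There are $\BO{n/L}=\BO{p}$ groups, prefix sums tell each tuple its group (hence a dedicated processor), and the assigned processor receives $\BO{L}$ tuples and emits the matching portion of $R\bowtie S$ by a local scan --- local computation is free in the MPC model, only the $\BO{L}$ communication is charged. For a heavy value $v$ I would allocate $p_v=\max\{\lceil p\,O_v/O\rceil,\ \lceil d_R(v)/L\rceil,\ \lceil d_S(v)/L\rceil,\ 1\}$ processors; using $\sum_v O_v\le O$, $\sum_v d_R(v)\le n\le pL$ and likewise for $S$, together with the fact that there are only $\BO{p}$ heavy values, one gets $\sum_{\text{heavy }v}p_v=\BO{p}$, so (rescaling $p$ by a constant) each heavy value gets a private block of $p_v$ processors, with block offsets computed by prefix sums over the list of heavy values. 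Arrange $v$'s block as an $a_v\times b_v$ grid with $a_v$ close to $\sqrt{p_v\,d_R(v)/d_S(v)}$ (clamped to $[1,p_v]$) and $b_v=\lfloor p_v/a_v\rfloor$, partition $R_v$ into $a_v$ near-equal parts and $S_v$ into $b_v$ near-equal parts, and broadcast the $i$-th $R$-part to the whole $i$-th row and the $j$-th $S$-part to the whole $j$-th column. Cell $(i,j)$ then receives $\lceil d_R(v)/a_v\rceil+\lceil d_S(v)/b_v\rceil=\BO{\sqrt{O_v/p_v}+1}=\BO{\sqrt{O/p}+1}=\BO{L}$ tuples and outputs the rectangle $R_v^{(i)}\times S_v^{(j)}$ locally; these rectangles tile $R_v\times S_v$. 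Each processor participates in at most one light group or one heavy cell, so over the $\BO{1}$ rounds its load is $\BO{L}$.

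The delicate step is the heavy case. A value with one huge side and one tiny side --- e.g. $d_R(v)=n$, $d_S(v)=1$, so $O_v$ is small --- would be starved of processors under the naive choice $p_v=\lceil p\,O_v/O\rceil$, which is why the extra $\lceil d_R(v)/L\rceil$ and $\lceil d_S(v)/L\rceil$ terms are needed; one then has to re-check both that $\sum_v p_v$ stays $\BO{p}$ and, case by case on which term attains the maximum and on the clamping of $a_v$, that the per-cell load $\lceil d_R(v)/a_v\rceil+\lceil d_S(v)/b_v\rceil$ really collapses to $\BO{L}$. Everything else --- sorting, the prefix-sum-based deterministic assignment of processors to values, and broadcasting within a grid row or column --- is standard $\BO{1}$-round MPC machinery under $p<n^\epsilon$.
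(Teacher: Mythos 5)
This theorem is imported from Hu, Tao and Yi \cite{HuTaYi17} and the paper gives no proof of it, so there is nothing in-paper to compare against; your reconstruction is, however, essentially the argument of the cited work and is correct. The key ingredients all match: the tuple-based lower bound ($\BO{L}$ tuples seen implies responsibility for $\BO{L^2}$ output pairs, plus the trivial $\BOM{n/p}$ term), the heavy/light split at degree threshold $L$, greedy consecutive packing of light values into $\BO{p}$ groups of $\BO{L}$ tuples, and output-proportional grids for heavy values; your observation that the allocation $p_v$ needs the extra $\lceil d_R(v)/L\rceil$ and $\lceil d_S(v)/L\rceil$ terms (and that the clamped grid dimensions must be checked case by case) is exactly where the care is required, and the accounting goes through.

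One point you fold into ``standard machinery'' that is worth making explicit: replicating an $R$-part of size $\Theta(L)$ to all $b_v$ cells of a row is load-$\BO{L}$ for each \emph{receiver}, but a naive direct send from the $\BO{1}$ processors that hold that part after sorting costs $\Theta(L\,b_v)$ \emph{sent} messages, which can be $\SOM{L}$ (e.g.\ $d_R(v)=n/(2p)$ and $d_S(v)=n/2$ force $b_v=\Theta(p)$). This is harmless either because one adopts the convention of \cite{HuTaYi17} that load counts only data received, or because under $p<n^\epsilon$ a constant-depth multicast tree (shard the part across $\BO{L}$ intermediaries in one round, then have each intermediary forward its shard to the whole row) keeps the per-round sent load at $\BO{L}$ as well; stating which of the two you rely on would close the last gap.
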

The term optimal holds for tuple-based algorithms, that is algorithms where tuples are atomic elements that must be processed or communicated in their entirety (i.e., indivisibility assumption).

\subsection{Similarity search problems}
Consider  a space $\mathbb{U}$ and a distance function $d: \mathbb{U}\times \mathbb{U} \rightarrow {\bf R}$. Let $r>0$ be an input parameter.
The \emph{$r$-near neigbor} problem is defined as follows: given a set $R$  of $n$ points from $\mathbb{U}$ and a query point $q\in \mathbb{U}$, return a point $x$ in $R$ at distance $d(x,q)\leq r$ if it exists.
The approximate version of the problem, named the \emph{$(r,c)$-near neighbor} problem where $c>1$ is the approximation factor, returns a point $x$ in $R$ with distance $d(x,q)\leq cr$ when there exists a point $x'\in R$ with distance $d(x',q)\leq r$.
The \emph{$r$-range reporting problem} requires to find all points at distance at most $r$ from a given query  $q$.

For convenience, we say that a pair $(x,y)$ is \emph{far} if $d(x,y)> cr$ (those that should not be reported), \emph{near} if 
$d(x,y)\leq r$  (those that
should be reported), and $c$-near if $r<d(x,y) \leq cr$
(those that should not be reported but the LSH provides no collision guarantees).
We also assume that each point in $\mathbb{U}$ can be stored in $\BO{1}$ memory words and that $d(x,y)$ can be computed in constant time (it is easy to extend bounds to the general case).

The similarity join problem is a batch version of the near neigbor problem. 
Specifically, the \emph{similarity join\/} with radius $r>0$ on the sets $R, S \subseteq \mathbb{U}$ is defined as the set 
$R \bowtie_{\leq r} S = \{ (x, y)\in R\times S \; | \; d(x, y	)\leq r \}$.
We let $n=|R|+|S|$.
In this paper, we will give a randomized solution that finds all pairs in $R \bowtie_{\leq r} S$ with high probability.
The proposed solution will generate also pairs  at distance larger than $r$ (i.e., false positives): while false positives can be easily removed by checking the true distance before emitting a pair, they still affect the performance of the algorithm (see the term $\OUT_r$ in the load upper bound).

\subsection{Locality-Sensitive Hashing}\label{sec:lsh}
  Much of recent work on similarity search and join has focused on {locality-sensitive hashing} (LSH).  In LSH, we hash each item in the dataset to a single {hash bucket} given by a randomized hash function.  Once the entire dataset is hashed, we perform a brute-force comparison between all pairs of points in the bucket, returning any similar points found.
The key idea behind LSH is the following property.  At a high level, an LSH family must map similar points to the same bucket with a much higher probability than far points. Formally, we have: 
\begin{definition}\label{def:LSH}
Fix a distance function $d: \mathbb{U}\times \mathbb{U} \rightarrow {\bf R}$.
A \emph{locality-sensitive hash} (LSH) family $\h$ is a family of functions $h: \mathbb{U} \rightarrow {\bf R}$ such that for each pair $x,y \in X$ and a random $h\in \h$, for arbitrary $q\in \mathbb{U}$, whenever $d(q,x)\leq d(q,y)$ we have $\Pr[h(q)=h(x)]\geq \Pr[h(q)=h(y)]$.
\end{definition}
%We say that $\mathcal{H}$ is \textit{monotonic} if $\Pr[h(x)=h(y)]$ is a non-increasing function of the distance function $d(x,y)$.
When an LSH is applied to solve the $(c,r)$-near neighbor problem, it is common to describe the scheme with the probabilities $p_1$ and $p_2$ defined as follows: for each $x,y$ with $d(x,y)\leq r$ then $\Pr[h(x)=h(y)]\geq p_1$; for each $x,y$ with $d(x,y)> cr$ then $\Pr[h(x)=h(y)]\leq p_2$ (there are no requirements for pairs with distance in $(r,cr]$).
We define $\rho(r_1,r_2)={\log p(r_1)}/{\log p(r_2)}$, where $p(r)$ is the collision probability at distance $r$, and define $\rho=\rho(r,cr)$.

        Oftentimes, probabilities $p_1$ and $p_2$ are constants.  Thus, hashing a single time would lead to poor recall (we would be likely to miss close points) and poor precision (most points---$np_2$ of them to be precise---would hash to a given bucket in expectation, so searching within a bucket would be extremely costly).  For many use cases, this can be handled using a two-pronged approach: we  {concatenate} many hash functions to improve precision, and use many independent  {repetitions} to improve recall.  This approach can be formalized as follows, using a {$k$-concatenated hash} $\h_k$.  
    Let $\mathcal{H}$ be a hash family with $p_1,p_2 = \Omega(1)$.  
    Let $H_k = (h_1,h_2,\ldots, h_k)$ be a hash function consisting of the concatenation of $k$ independent hash functions from $\mathcal{H}$, and let $k = \log_{1/p_2} n$. Then
\begin{itemize}[topsep=0pt]
    \item for $x$, $y$ with $s(x,y)\geq r$, $\Pr(H_k(x) = H_k(y)) \geq 1/n^{\rho}$.  Thus, 
        after $n^{\rho}$ independent repetitions, 
        $x$ and $y$ will hash to the same bucket with constant probability.  
    \item for $x$, $y$ with $s(x,y)\leq cr$, $\Pr(H_k(x) = H_k(y)) \leq 1/n$.  Thus, each bucket will contain one far point in expectation. 
\end{itemize}

        In a single-processor setting, this framework allows us to perform single similarity {searches} in $\tilde{O}(n^{\rho})$ time.  
    Let $R$ and $S$ be sets of size $n$. Let $R\bowtie_r S$ denote the set of pairs $x\in R$ and $y\in S$ with $s(x,y) \geq r$; likewise let $R\bowtie_{cr} S$ denote the set of pairs with $s(x,y)\geq cr$.  
    Then the join between $S$ and $R$ can be computed using a k-concatenated LSH with parameters $p_1$ and $p_2$ in time
    $
        O\left(n^{1 + \rho} + n^{\rho}|R\bowtie_{r} S| + |R\bowtie_{cr} S|\right).
    $
These LSH-based approaches form the basic building blocks of this paper.  

\subsection{Adaptive Near Neighbor}
Our work leverages the recent work by Ahle et al. \cite{AhleAuPa17} that presents a data structure for the range reporting problem (in which we wish to find all near points to a query). 
Let $N_r(q)$ be the number of near points to a query $q$ (similarly, $N_{cr}(q)$ is the number of $c$-near points). 
Let the \emph{expansion} $c^*_{q}$ be the largest value such that there are at most twice as many $c^*_{q}$-near points of $q$ than there are $r$-near points of $q$.
(i.e.  $N_{cr}(q)\leq 2 N_r(q)$).
%(we set $c^*_{q,r}=+\infty$ if there are more than $n/2$ $r$-near points of $q$ in $R$).  

\begin{theorem}[\cite{AhleAuPa17}] 
    Consider a query point $q$, and parameters $r>0$ and $c > 1$. 
Then, the near neighbors of $q$ can be reported with constant probability in time
\begin{itemize}
    \item $\BO{N_r(q) (n/N_r(q))^{\rho(r,c^*_q)}}$ if $c_q^*\geq c$, or
    \item $\BO{N_r(q) (n/N_r(q))^{\rho(r,c)}+N_{cr}(q)}$ if $c_q^*< c$.
\end{itemize}
\end{theorem}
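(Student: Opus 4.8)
\emph{The plan} is to build one multi-scale LSH structure, answer a query by descending it to an adaptively chosen depth, and first pin down exactly what the ``right'' depth buys before worrying about choosing it without knowing $N_r(q)$ or $c_q^*$. Concretely, I would fix $k_{\max}=\lceil\log_{1/p_2}n\rceil$ and build $\BTO{n^\rho}$ independent LSH tries of depth $k_{\max}$, storing each point $x$ in the bucket at every depth $j$ along its hash path $h_1(x),\dots,h_{k_{\max}}(x)$; then at depth $j$ a point $x$ collides with $q$ with probability $p(d(q,x))^j$, and the bucket sizes along $q$'s path are non-increasing in $j$.

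\emph{The ideal single-scale query.} Suppose first that a good depth $j$ were known. I would descend to depth $j$ in each of $\Theta(p_1^{-j})$ tries (ignoring a union-bound logarithmic factor, discussed below), brute-force $q$'s depth-$j$ bucket in each, and report the near points found. For recovery, a near point collides with $q$ at depth $j$ with probability at least $p_1^j$, so $\Theta(p_1^{-j})$ tries recover all of them with constant probability. For the work, the total brute-force cost is $\BO{\sum_x p(d(q,x))^j p_1^{-j}}$. If $c_q^*\ge c$, choose $j^\star$ with $n\,p(c_q^*r)^{j^\star}=\Theta(N_r(q))$ --- valid since this forces $j^\star\le k_{\max}$ exactly because $p(c_q^*r)\le p_2$ --- and split the sum into the at most $2N_r(q)$ points within distance $c_q^*r$ (each appearing at most $p_1^{-j^\star}$ times) and the at most $n$ points beyond $c_q^*r$ (each colliding with probability at most $p(c_q^*r)^{j^\star}$, by monotonicity of $p$): both parts are $\BO{N_r(q)\,p_1^{-j^\star}}$. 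If $c_q^*<c$, choose instead $j^\star\le k_{\max}$ with $n\,p_2^{j^\star}=\Theta(N_r(q))$; then the near points give $\BO{N_r(q)\,p_1^{-j^\star}}$, the $c$-near points give $\BO{N_{cr}(q)}$ (collision probability at most $p_1^{j^\star}$, hence $O(1)$ expected appearances each, using $p_1=p(r)$ and monotonicity of $p$), and the far points give $\BO{n\,p_2^{j^\star}p_1^{-j^\star}}=\BO{N_r(q)\,p_1^{-j^\star}}$. A one-line exponent computation gives $p_1^{-j^\star}=(n/N_r(q))^{\rho(r,c_q^*)}$ in the first case and $(n/N_r(q))^{\rho(r,c)}$ in the second, which is exactly the two claimed bounds. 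Note the two cases are the \emph{same} algorithm: capping the descent at $k_{\max}$ is precisely what turns the first into the second.

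\emph{Making it adaptive.} Since $j^\star$ depends on the unknown $N_r(q)$ and $c_q^*$, I would replace ``descend to depth $j^\star$'' by the recursive stopping rule ``keep descending until $q$'s bucket has at most $\Theta(\hat N)$ points (or depth $k_{\max}$ is reached)'', wrapped in a geometric search over a guess $\hat N$ for $N_r(q)$ (doubling $\hat N$ from $1$, with an appropriate termination test based on the running count of distinct recovered near points). Because the balanced bucket has size $\Theta(N_r(q))$, the guess $\hat N=\Theta(N_r(q))$ drives the descent to depth $\approx j^\star$ and the number of tries in that phase to $\approx p_1^{-j^\star}$; and because the per-phase cost has, up to the $N_{cr}(q)$ term, the shape $\BTO{\hat N^{1-\rho}n^\rho}$ --- increasing in $\hat N$ --- the phases form a geometric series dominated by the last one, so the total matches the single-scale bound. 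This is the recursive query used by Ahle et al.

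\emph{The hard part} is the joint calibration in the adaptive step: (a) arranging the geometric search so that unsuccessful guesses ($\hat N\ll N_r(q)$) are aborted cheaply rather than paying a full deep-descent brute force, and designing a termination test that uses only observed quantities yet provably fires once $\hat N=\Theta(N_r(q))$; (b) recovering \emph{every} near point with constant probability without paying the $\BO{\log N_r(q)}$ union-bound factor --- this is what would otherwise turn the clean $\BO{\cdot}$ bounds into $\BTO{\cdot}$ ones, and removing it (or justifying keeping it) needs a more careful second-moment argument; (c) keeping the $c$-near contribution at the promised additive $\BO{N_{cr}(q)}$ rather than $\BO{N_{cr}(q)\,p_1^{-j^\star}}$, which rests on $p_1=p(r)$ together with monotonicity of $p$ (not on a $p_1/p_2$ gap) plus a mildly conservative number of tries, and on checking that when $N_{cr}(q)$ is large enough to force the descent near $k_{\max}$ the resulting cost is still dominated by the $N_{cr}(q)$ term; and (d) making the stopping rule robust to bucket-size sequences that jump past the ideal scale. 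Everything else is the standard concatenated-LSH bookkeeping already set up in the preliminaries.
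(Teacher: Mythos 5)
This theorem is imported from Ahle, Aum\"{u}ller, and Pagh and is not proved in the paper; the only material to compare against is the paper's sketch of the data structure in the ``Adaptive Near Neighbor'' subsection. Your single-scale analysis --- store each point at every depth of $\tilde O(n^\rho)$ concatenated-LSH tries, query $\Theta(p_1^{-j^\star})$ repetitions at a depth $j^\star$ calibrated so that $n\,p(c^*_q r)^{j^\star}=\Theta(N_r(q))$ when $c^*_q\geq c$ and $n\,p_2^{j^\star}=\Theta(N_r(q))$ otherwise, then split the cost over near, $c$-near and far points --- is the correct core argument and matches the cited work; the exponent computations and the check $j^\star\leq k_{\max}$ are right. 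Where you genuinely diverge is the adaptive step. You propose a guess-and-double search over $\hat N$ with a termination test, and you correctly identify that calibrating its abort logic is the hard part. Ahle et al.\ (and the paper's summary) do something simpler that makes your difficulties (a) and (d) disappear: for each level $k$ they form the estimate $\tilde W_{q,k}$ by summing the sizes of the buckets containing $q$ across the $O(p_1^{-k}\log k)$ repetitions at that level, and query at $\tilde k_q=\argmin_k \tilde W_{q,k}$. Reading all the bucket sizes costs $O(\sum_k p_1^{-k})=O(p_1^{-K})=O(n^\rho)$, which is already dominated by the claimed bound since $N_r(q)^{1-\rho}n^\rho\geq n^\rho$, and the analysis reduces to the single observation the paper records: the cost at the argmin level is upper bounded by the cost at the oracle level $\lceil\log(n/N_r(q))/\log(1/p_2)\rceil$, i.e.\ exactly your $j^\star$, so no termination test or phase accounting is needed. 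This distinction is not cosmetic for the present paper: the MPC algorithm computes $k_x=\argmin_i \E[W_{x,i}]$ for \emph{all} points at once with a single round of sorting and prefix sums, which is possible precisely because the rule is a non-adaptive argmin over precomputable level costs; a doubling search with data-dependent aborts is inherently sequential and would inflate the round complexity. Your concerns (b) (the $\log$ factor from the union bound over all $N_r(q)$ near points) and (c) (keeping the $c$-near contribution additive) are legitimate and are handled in Ahle et al.\ by the $\log$-factor padding of repetitions and the per-point constant-probability guarantee, respectively; they do not invalidate your outline.
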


Let $\h_k$ denote the LSH obtained concatenating $k$ randomly and uniformly selected hashes from $\h$ and let $K=\lceil \rho \log_{1/p_2} n\rceil$. 
The data structure leverages a multi-level LSH: 
in each level $0\leq k <K$, the input set is partitioned according to $\BO{p_1^{-k} \log k}$ hash  functions in $\h_k$.
For a given query $q$, we let $W_{q,k}$ be the cost for finding the near neighbor of $q$ using the LSH at level $k$.
Since $\Pr[h_k(q)=h_k(x)]=\Pr[h(q)=h(x)]^k$, the expected value of $W_{q,k}$ is:
\begin{align*}
\E[W_{q,k}] &= 
%p_1^{-k} \left(1+\sum_{x\in R} \Pr[h_k(q)=h_k(x)]\right) & = 
p_1^{-k} \left(1+\sum_{x\in R} \Pr[h(q)=h(x)]^k\right) .
\end{align*}
The value $k_x = \argmin_{k\in [K]} W_{q,k}$ gives the best level to use for finding all near neighbors of $q$.
The  data structure computes an estimate $\tilde W_{q,k}$  of $W_{q,k}$ by summing the sizes of buckets where $q$ collides at level $k$ and then computing $\tilde k_x = \argmin_{k\in [K]} \tilde W_{q,k}$.
It is relevant to recall that the cost of removing $x$ at level $k'_x$ is upper bounded by the cost at level $\lceil \log(n/N(q,r))/\log (1/p_2)\rceil$ (without knowing the actual output size).

\section{A constant-round algorithm for similarity join}
%The algorithm proposed in this section extends and parallelizes the approach proposed in \cite{AhleAuPa17}.
Let $\h$ be an $(r,cr,p_1,p_2)$ LSH family, let $\rho = \log p_1 / \log p_2$, and let $\h_k$ denote the LSH family obtained by concatenating $k\geq 1$ copies of independent and	identically distributed LSHs in $\h$.

At the high level, the algorithm constructs a multi-level LSH data structure as in \citep{AhleAuPa17}, where the  $i$-level uses $i$-concatenated LSHs and
the bottom levels (i.e., below a given threshold $\kappa$) are removed.
Then, the algorithm removes each input point $x \in R \cup S$ by searching in the buckets of LSHs at level $k_x$, which is a suitable value that reduces the communication cost for removing point $x$. 
We initially assume that all the $k_x$ values are known, and we will later show that they can be computed (with a slight increase in the communication complexity) in one round.

Let $\kappa =\lceil (\rho/(1+\rho))\log_{p_1^{-1}}p \rceil$. For each point $x\in R$, we define  $k_x = \argmin_{i\in[\kappa]} \E[W_{x,i}]$, that is:
\begin{align*}
k_x 
  =\argmin_{i\in[\kappa]}  \left(p_1^{-i}\left(1+\sum_{y\in S} Pr\left[h_i(x)=h_i(y) \right]\right)\right). 
\end{align*}
 We recall that the value $k_x$ defined in \cite{AhleAuPa17} is defined using $\kappa = \lceil \rho \log_{1/p_1} n \rceil$.  The term $k_y$ for each $y\in S$ is defined equivalently.

The algorithm consists of $\kappa$ phases, each one requiring $\BO{1}$ rounds. 
These phases can be executed sequentially, resulting in an $\BO{\kappa}$-round algorithm; however, since there is no dependency among phases, they can be executed concurrently to give an $\BO{1}$-round algorithm. 
We assume the input to be evenly distributed among the $p$ processors.
An input point $x\in R \cup S$  is said \emph{active} during the $k_x$-th phase, \emph{passive} during the $i$-th phase for each $1\leq i<k_x$, and \emph{dead} during the $i$-th phase for each $i>k_x$.
The $i$-th phase, with $i\in [\kappa]$, is organized as follows:
\begin{enumerate}
    \item Choose $t_i = \Theta(p_1^{-i})$ hash functions $h^i_1,\ldots, h^i_{t_i}$ in $\h_k$ randomly and independently, and broadcast them to all $p$ processors.
\item Verify if $x$, for each $x\in R\cup S$, is passive, active, or dead  by comparing the index $i$ with the value $k_x$.
\item For each point $x\in R \cup S$ and hash functions $h^i_1,\ldots, h^i_{t_i}$ in $\h_k$, generate a tuple with key $(i,j,h^i_j(x))$ and value $x$ (including its status as active or passive).  
\item Let $Q_{(i,j,\ell)}$ be the set of tuples with key $(i,j,\ell)$. Remove all tuples associated with a key  $(i,j,\ell)$ that does not have an active point in $Q_{(i,j,\ell)}$.
This step can be executed with a sort and a prefix-sum computation.
\item Using the equi-join algorithm from \cite{HuTaYi17} (see Theorem \ref{thm:equijoin}) on the remaining tuples,  generate all pairs $(x,y)$ such that $x\in R$, $y\in S$ and at least one of $x$ and $y$ is active, and then output only near pairs (i.e., $d(x,y)\leq r$). 
\end{enumerate}

When the $\kappa$ phases are run concurrently, each entry $x$ is at the same time passive, active and dead. 
In other words, for each entry $x$ we generate the hash values $(i,j,h^i_j(x)$ for each $1\leq i \leq k_x$ and $1\leq j < t_i$. 
Note that the $\kappa$  equi-joins can be run as a single equi-join, with a slight improve of the load. We have the following theorem.

\iffalse
For each $1\leq i < \kappa$, we say that a point in $x\in R$ is $i$-dense if its number of near points in $S$ is in the range $[n p_2^{i-1},n p_2^{i})$; 
a point is $\kappa$-dense if its number of near points in $S$ is smaller than $n p_2^{\kappa}$ (similar definitions hold for points in $S$).  We let $\OUT_{r,i}$ denote the number of pairs of points at distance at most $r$ containing at least one $i$-dense point, and $\OUT_{cr}$ be the number of pairs of points at distance at most $cr$. 
\fi

\begin{theorem}\label{thm:alg_const_round}
The above algorithm runs in  $\BO{1}$ rounds and  load:
\begin{equation*}
\BTO{\sqrt{\left(\sum_{i=0}^\kappa \frac{\OUT_{r,i}}{p p_1^i }\right)} + \sqrt{\frac{\OUT_{cr}}{p}} + \frac{n}{p^{1/(1+\rho)}}}.
\end{equation*}
\end{theorem}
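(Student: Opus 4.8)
My plan is to establish the $\BO{1}$-round claim directly and to bound the load by controlling, separately, (i) the total number $T$ of (key, value) tuples the algorithm ever creates, and (ii) the total number $\sum_i O_i$ of pairs containing an active point that the equi-joins emit, and then to invoke \thmref{equijoin} on the single combined equi-join, whose load is $\BO{\sqrt{(\sum_i O_i)/p}+T/p}$. Throughout, for a point $x$ I write $N_r(x)$ and $N_{cr}(x)$ for the number of points of the opposite relation at distance $\le r$ and in $(r,cr]$ from $x$, $p(d)$ for the collision probability of $\h$ at distance $d$ (so $p_1\le p(d)\le 1$ for $d\le r$, $p(d)\le p_1$ for $r<d\le cr$, $p(d)\le p_2$ for $d>cr$), and $i_x$ for the density level of $x$; I record that $i_x<\kappa$ forces $n\,p_2^{i_x}\le N_r(x)$, and that $p_1^{-\kappa}=\BT{p^{\rho/(1+\rho)}}$ and $p_2^{\kappa}=\BT{p^{-1/(1+\rho)}}$ by the choice of $\kappa$. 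I will take $t_i=\BT{p_1^{-i}\log n}$.

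The round count is immediate: each phase performs a broadcast of $\sum_i t_i=\BTO{p^{\rho/(1+\rho)}}$ hash functions (by standard $\BO{1}$-round replication, whose load $\BTO{p^{\rho/(1+\rho)}}$ is within the claimed bound since $p<n$), a local classification of each point against $k_x$, local tuple generation, one sort plus prefix-sum (step~4), and one equi-join (step~5), each of which is $\BO{1}$ rounds because $p<n^\epsilon$; the phases are data-independent, so they run concurrently as a single sort and a single equi-join, and computing the $k_x$ values costs one more round, exactly as in~\cite{AhleAuPa17}. For correctness I would fix a near pair $(x,y)$ with, say, $k_x\le k_y$: then $x$ is active in phase $k_x$, and in each of its $t_{k_x}=\BT{p_1^{-k_x}\log n}$ repetitions $x$ and $y$ collide with probability $\ge p_1^{k_x}$, so they collide in some repetition except with probability $n^{-\BOM{1}}$; in that repetition the corresponding bucket survives step~4 (it contains the active point $x$) and the equi-join of step~5 emits $(x,y)$. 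A union bound over all near pairs completes correctness.

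For the load I would argue as follows. Since step~4 only deletes tuples, $T$ is at most the number of tuples created, $\sum_{i}|\{x:k_x\ge i\}|\,t_i\le n\sum_i t_i=\BTO{n\,p_1^{-\kappa}}=\BTO{n\,p^{\rho/(1+\rho)}}$, so $T/p=\BTO{n/p^{1/(1+\rho)}}$ --- the third term. For $\sum_i O_i$: a pair $(x,y)$ can be emitted in phase $i$ only if $i\in\{k_x,k_y\}$ (step~5 emits only pairs containing an active point, and the equi-join is run between the active tuples and all tuples so that its output size is $\BO{\sum_i O_i}$), whence $\E[\sum_i O_i]\le 2\sum_{x\in R\cup S}t_{k_x}\sum_{y}p(d(x,y))^{k_x}=\BTO{\sum_{x}\E[W_{x,k_x}]}$, the inner sum being over the opposite relation. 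Because $k_x=\argmin_{i}\E[W_{x,i}]$ and $i_x$ lies in the argmin's range, $\E[W_{x,k_x}]\le\E[W_{x,i_x}]=p_1^{-i_x}\bigl(1+\sum_y p(d(x,y))^{i_x}\bigr)\le p_1^{-i_x}\bigl(1+N_r(x)+n\,p_2^{i_x}\bigr)+N_{cr}(x)$; for $i_x<\kappa$ the bound $n\,p_2^{i_x}\le N_r(x)$ gives $\E[W_{x,i_x}]=\BO{p_1^{-i_x}N_r(x)+p_1^{-i_x}+N_{cr}(x)}$, while for $i_x=\kappa$ I use $p_1^{-\kappa}n\,p_2^{\kappa}=n(p_2/p_1)^{\kappa}=\BT{n\,p^{-(1-\rho)/(1+\rho)}}$ to get $\E[W_{x,\kappa}]=\BO{p_1^{-\kappa}N_r(x)+p_1^{-\kappa}+N_{cr}(x)+n\,p^{-(1-\rho)/(1+\rho)}}$. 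Summing over $x$ and using $\sum_{x:\,i_x=i}N_r(x)=\BT{\OUT_{r,i}}$, $\sum_x N_{cr}(x)=2\OUT_{cr}$, $\sum_x p_1^{-i_x}\le n\,p_1^{-\kappa}$ and $|\{x:i_x=\kappa\}|\le n$,
\[
\E\!\left[\sum_i O_i\right]=\BTO{\sum_{i=0}^{\kappa}\frac{\OUT_{r,i}}{p_1^{i}}+\OUT_{cr}+n\,p^{\rho/(1+\rho)}+n^2p^{-(1-\rho)/(1+\rho)}}.
\]
Substituting into $\BO{\sqrt{(\sum_i O_i)/p}+T/p}$, the $\OUT_{r,i}$ part yields $\sqrt{\sum_i\OUT_{r,i}/(p\,p_1^{i})}$, the $\OUT_{cr}$ part yields $\sqrt{\OUT_{cr}/p}$, and both $\sqrt{n\,p^{\rho/(1+\rho)}/p}$ and $\sqrt{n^2p^{-(1-\rho)/(1+\rho)}/p}$ reduce to $n/p^{1/(1+\rho)}$ (the former because $\sqrt z\le z$ for $z=n/p^{1/(1+\rho)}\ge1$), which with $T/p$ gives the third term; a standard concentration (or restart) argument upgrades the expectation bound on $\sum_i O_i$ to a high-probability one.

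The main obstacle is item (ii). The delicate points are: (a) arguing that every colliding pair is emitted only in the phases $k_x$ and $k_y$ --- this is precisely what the active/passive/dead rule together with the step-4 filter ``discard buckets with no active point'' buys, since without it a far pair could be charged to all levels below $\min(k_x,k_y)$; and (b) converting the per-point budget $\E[W_{x,k_x}]\le\E[W_{x,i_x}]$ into an $\OUT_{r,i}$-weighted sum, which rests on $n\,p_2^{i_x}\le N_r(x)$ for $i_x<\kappa$ and, for the sparse ($\kappa$-dense) points that hit the level cap $\kappa$, on checking that the residual far-collision budget $p_1^{-\kappa}n\,p_2^{\kappa}$ is $\BTO{n\,p^{-(1-\rho)/(1+\rho)}}$ per point and is therefore absorbed by the $n/p^{1/(1+\rho)}$ term. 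The $\BO{1}$-round bound and the correctness argument are, by comparison, routine.
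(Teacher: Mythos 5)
Your proof is correct and follows essentially the same route as the paper's: you bound the pairs emitted per point by $\E[W_{x,k_x}]\le\E[W_{x,i_x}]$ via the argmin property, split on whether the density level $i_x$ is below $\kappa$ (where $n p_2^{i_x}\le N_r(x)$ absorbs the far-collision term) or equal to $\kappa$ (where the residual $n(p_2/p_1)^{\kappa}$ term summed over $n$ points gives $n^2/(p\,p_1^{2\kappa})$), and feed the resulting sum $\sum_i \OUT_{r,i}/p_1^{i}+\OUT_{cr}+n^2/(p\,p_1^{2\kappa})$ into the equi-join load alongside the $\BTO{n p_1^{-\kappa}/p}$ tuple-generation term. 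You are in fact more explicit than the paper on a few points it leaves implicit (restricting the join output to pairs containing an active point, the $+1$ term in $W$, the spelled-out correctness argument, and the expectation-to-high-probability upgrade).
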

\begin{proof}
The correctness of the algorithm follows from \cite{AhleAuPa17}.
We now upper bound the cost of Step 4.
    Consider a point $x$ in $R\cup S$, and let $x$ be $\ell_x$-dense. Denote with $N_r(x)$ and $N_{cr}(x)$  the number of near and $cr$-near points to $x$ respectively. We observe that
     \[
        \OUT_{r,i} = \Theta\left(\sum_{\text{$i$-dense x}} N_r(x)\right), \text{ and } \OUT_{cr} = \Theta\left(\sum_x N_{cr}(x) \right). 
    \]
    By construction, the total number of output pairs involving $x$ is upper bounded by $W_{k_x,x} \leq W_{\ell_x,x}$.  We now bound $W_{\ell_x, x}$; summing over these gives the total number of output pairs.  
    
    If $\ell_x = \kappa$, then $W_{\ell_x,x} \leq N_r(x)/p_1^{\kappa} + N_{cr}(x) + n(p_2/p_1)^{\kappa}$. 
    We have $n(p_2/p_1)^{\kappa} = n/p_1^{\kappa (1-1/\rho)} = \Theta(n/pp_1^{2\kappa})$.
    
    If $\ell_x < \kappa$, then $W_{\ell_x,x} \leq N_r(x)/p_1^{\ell_x} + N_{cr}(x) + n(p_2/p_1)^{\ell_x}$. Since $x$ is $\ell_x$-dense, 
    $N_r(x)/p_1^{\ell_x} = \Theta(n(p_2/p_1)^{\ell_x})$. 
    Thus, $W_{\ell_x,x} = O(N_r(x)/p_1^{\ell_x} + N_{cr}(x))$.
  
    Summing over all $x$ and invoking Theorem~\ref{thm:equijoin}, Step 4 has load 
    \[
        \BTO{\sqrt{\left(\sum_{i=0}^\kappa \frac{\OUT_{r,i}}{p_1^i}+ {\OUT_{cr}} + \frac{n^2}{pp_1^{2\kappa}}\right){\frac{1}{p}}} }
        %+ \frac{n}{p^{1/(1+\rho)}}}
    \]
    which reduces to the claimed bounds using Jensen's inequality.

We now consider the first three steps. 
We observe that the number of processors is at most  $n^\epsilon$ for some constant $\epsilon>0$, and hence prefix-sum, broadcast, sorting require constant rounds and load $\BO{I/p}$, where $I$ is the input size.
Therefore the first three steps require $\BO{1}$ rounds.
Step 1 requires load $\BTO{p_1^{-\kappa}/p}$.
    For Step 2, we observe that each input point is only copied once for each hash function; since there are  $\BTO{p_1^{-\kappa}}$ hash functions and each processor contains $\BO{n/p}$ points, the load is $\BTO{n p_1^{-\kappa}/p}=\frac{n}{p^{1/(1+\rho)}}$ (indeed the sorting step guarantee load balancing).
Step 3 consists of a sorting and prefix sum on $\BTO{n p_1^{-\kappa}}$ points and hence its load is $\BTO{n p_1^{-\kappa}/p}=\frac{n}{p^{1/(1+\rho)}}$.
Therefore, the first three steps meet the claimed bound and the theorem follows.
\end{proof}

The previous algorithm assumes that $k_x$ is known for each input point. 
However, it is easy to see that the values can be computed in $\BO{1}$ rounds and load 
$\BTO{\frac{n}{p^{1/(1+\rho)}}}$, which is negligible compared to the overall cost of the algorithm. 
Indeed, it suffices to generate all keys up to level $\kappa$,  sort them and compute some prefix sums to estimate the expected costs $\E[W_{i,x}]$ as in \citep{AhleAuPa17}. 

\section{Conclusion}
We have seen how to improve the output sensitivity of the algorithm in \cite{HuTaYi17} with a simple $\BO{1}$-round algorithm.
A limit of our approach is the $n/p^{1/(1+\rho)}$ term in the load, which  is due to the computation of $k_x$ values (and to a non tight bound in the proof of Theorem \ref{thm:alg_const_round}).
We conjecture that it is possible to reduce this term for some output densities by using approximations to the $k_x$ values.
One promising approach is the technique for the Braun-Blanquet similarity in \cite{ChristianiPaSi17}, where $k_x$ values are replaced with greedily-computed collision probabilities.
However, such an approach seems to increase the number of rounds to $\BO{\log n}$.

Another interesting direction is an experimental evaluation of our approach. 
Analysis for specific inputs in \cite{AhleAuPa17} and experiments in \cite{ChristianiPaSi17} indicate that our recursive approach may give speedups beyond the worst-case theoretical analysis. It would be interesting to see the load incurred using our approach on practical datasets.
%looking at the analysis for some specific inputs in \cite{AhleAuPa17} and by experiments in \cite{ChristianiPaSi17}, we believe that better performances can be reached compared to the worst-case theoretical analysis.

\section{Acknowledgements}
We would like to thank Rasmus Pagh, Johan Sivertsen, Shikha Singh, John Augustine, and Mohit Daga for helpful discussions. 
We also thank the participants of the AlgoPARC Workshop on Parallel Algorithms and Data Structures, in part supported by the NSF grant no. 1745331.
The authors were supported in part by the  ERC grant agreement no.~614331, and by project SID2017 of the University of Padova.
  BARC, Basic Algorithms Research Copenhagen, is supported by VILLUM Foundation grant 16582.

\bibliographystyle{plain}
\bibliography{all}

\end{document}